\newtheorem{lemma}{Lemma}
\def\BState{\State\hskip-\ALG@thistlm}
\newcommand{\comment}[1]{}
\begin{document}
\title{Dynamic Radio Cooperation for Downlink Cloud-RANs with Computing Resource Sharing}

\author{\IEEEauthorblockN{Tuyen X. Tran and Dario Pompili}
\IEEEauthorblockA{Department of Electrical and Computer Engineering\\
Rutgers University--New Brunswick, NJ, USA\\
E-mails: \{tuyen.tran, pompili\}@cac.rutgers.edu}
}

\maketitle

\thispagestyle{empty}

\pagestyle{plain} \pagenumbering{arabic}

\begin{abstract} 
A novel dynamic radio-cooperation strategy is proposed for Cloud Radio Access Networks (C-RANs) consisting of multiple Remote Radio Heads (RRHs) connected to a central Virtual Base Station (VBS) pool. In particular, the key capabilities of C-RANs in computing-resource sharing and real-time communication among the VBSs are leveraged to design a joint dynamic radio clustering and cooperative beamforming scheme that maximizes the downlink weighted sum-rate system utility (WSRSU). Due to the combinatorial nature of the radio clustering process and the non-convexity of the cooperative beamforming design, the underlying optimization problem is NP-hard, and is extremely difficult to solve for a large network. Our approach aims for a suboptimal solution by transforming the original problem into a Mixed-Integer Second-Order Cone Program (MI-SOCP), which can be solved efficiently using a proposed iterative algorithm. Numerical simulation results show that our low-complexity algorithm provides close-to-optimal performance in terms of WSRSU while significantly outperforming conventional radio clustering and beamforming schemes. Additionally, the results also demonstrate the significant improvement in computing-resource utilization of C-RANs over traditional RANs with distributed computing resources.

\end{abstract}
\begin{IEEEkeywords}
Cloud radio access networks; dynamic clustering; joint beamforming; computing resource sharing.

\end{IEEEkeywords}

\section{Introduction}\label{sec:intro}

{\bf{Overview:}} The proliferation of personal mobile-computing devices along with a plethora of data-intensive mobile applications has resulted in a tremendous increase in demand for ubiquitous and high-data-rate wireless communications over the last few years. To cope with this challenge, the current trend in cellular networks is to increase the densification of small cells and to leverage the cooperation among multiple antennae and base stations (BSs). In this way, higher system throughput and reduced interference can be achieved via Coordinated Multi-Point (CoMP) transmission and reception techniques\cite{lee2012comp}, which have been adopted in 3GPP Long-Term Evolution (LTE)-Advanced. In CoMP, a set of neighboring cells are grouped into clusters, each consisting of connected BSs that share Channel State Information (CSI) and user signals. This scheme allows for joint processing among BSs that can effectively mitigate the Inter-Cell Interference (ICI) and thus improve the spectral efficiency. However, in current cellular-network architectures, physical links only exist between BSs and their corresponding access network gateway and thus, the control signaling between BSs needed to realize CoMP has to travel through costly backhaul links, and often over a one-level higher layer in the aggregation hierarchy. Consequently, the latency and scarce interconnection capacity among BSs have resulted in limited deployments of CoMP in practice and, in turn, in modest BS cooperation.

\begin{figure}
 \centering
\includegraphics[scale = 0.4]{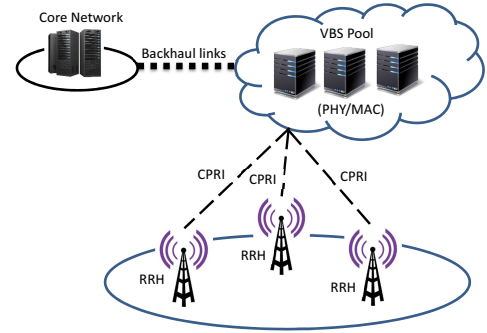} 
\caption{Cloud Radio Access Network (C-RAN) Architecture.}\label{fig:cran_nw}
\end{figure}



Recently, Cloud Radio Access Network (C-RAN)~\cite{whitepaper13, sundaresan2013cloud, Pompili2015} architecture has been introduced as a new paradigm for broadband wireless access that allows for dynamic reconfiguration of computing resources and provides a higher degree of cooperation as well as communication among the BSs. The fundamental characteristics of C-RAN can be summarized as i) centralized management of computing resources, ii) reconfigurability of spectrum resources, iii) collaborative communications, and iv) real-time cloud computing on generic platforms. A typical C-RAN, as shown in Fig.~\ref{fig:cran_nw}, is composed of three main parts: 1) Remote Radio Heads (RRHs) plus antennae, which are located at the cell sites and are controlled remotely by Virtual Base Stations (VBSs) housed in a centralized VBS pool, 2) the Base Band Unit (BBU) (VBS pool) composed of high-speed programmable processors and real-time virtualization technology to carry out digital processing tasks, 3) low-latency, high-bandwidth Common Public Radio Interface (CPRI), which connects the RRHs to the VBS pool. 

In this paper, we aim to realize the benefits offered by C-RANs to improve the cellular network performance via dynamic adaptation of radio clusters and computing resources. Firstly, the co-location model of the VBSs allows for their real-time intercommunication, thus fully enabling a coordinated joint transmission of RRHs that is currently practically constrained. In particular, control signals to realize CoMP between the BSs that traditionally travel via back-haul links can now be exchanged through the InfiniBand interconnection among the VBSs. A C-RAN-based radio-cooperation scheme would be fully dynamic and user specific, in the sense that we can form a virtual cluster of RRHs to coordinate their downlink transmissions to each of the scheduled users. In this strategy, each scheduled user is always the central of a RRH cluster, making it different from the traditional CoMP techniques where the RRHs are grouped into fixed and non-overlapping clusters.


{\bf{Related Works:}}
Pioneering works on realizing the benefit of C-RANs have focused on the overall system architecture with emphasis on system issues, feasibility of virtual software base station stacks, performance requirements and analysis of optical links between RRHs and their VBSs \cite{whitepaper13, bhaumik2012cloudiq, madhavan2012quantifying}. 
On the other hand, considerable attention has also been paid on cooperative communications techniques for C-RAN under various different objectives. For example, in \cite{hajisamicocktail} the authors propose a blind source separation strategy to mitigate interference in uplink C-RAN; in \cite{shi2014group, Luo2015downlink} the authors consider a network power minimization problem. In addition, the optimal tradeoff design between transmit power and backhaul capacity is studied in \cite{ha2014energy}, while the tradeoff between transmit power and delay performance is investigated in \cite{li2014resource} via a cross-layer based approach.  

In this paper, we study a dynamic radio cooperation technique and consider Weighted Sum-Rate System Utility (WSRSU) as the performance metric under a practical constraint on computing resources at the VBS pool. Note that the BS cooperation for WSRSU maximization problem has been studied in traditional CoMP systems. However, due the scarce interconnection among the BSs and the lack of global CSI available at each BS, existing clustering and cooperative beamforming techniques are mostly heuristic-based (i.e., the clustering decision is made based on the relative signal strength and locations of the users, and the beamforming design is not adaptive to \emph{inter-cluster interference}) \cite{gong2011joint, huang2009increasing, baracca2012dynamic}.

{\bf{Our Contributions:}}
In this paper, we propose a novel dynamic radio cooperation strategy for C-RANs that takes advantage of real-time communication and computing-resource sharing among the VBSs. Unlike existing methods, our approach makes the joint clustering and beamforming decision based on global CSI available at the VBS pool, thus being able to mitigate both the \emph{intra-cluster} and \emph{inter-cluster} interference in order to significantly improve the system's performance. Our proposed solution dynamically groups the RRHs into user-specific (potentially overlapping) clusters and designs the downlink beamformers at each RRH in order to maximize the WSRSU function. In particular, within each scheduling interval, i.e., a time-frequency resource block, a group of RRHs is identified to serve each scheduled user. To realize the proposed \emph{Dynamic Radio Cooperation (Dynamic-RC)} strategy, we formulate the associated optimization problem, which we also refer to as the \emph{Dynamic-RC problem}, that aims to maximize the WSRSU under the transmit power constraints at the RRHs and the total computing-resource constraint at the VBS pool. Due to the combinatorial nature of the radio clustering process and the non-convexity of the cooperative beamforming design, the \textit{Dynamic-RC} problem is extremely difficult to solve optimally in practical (polynomial) time for a system with a large number of users and RRHs. To overcome this drawback and solve the problem efficiently, our approach aims for a suboptimal solution with reasonable complexity. In particular, we exploit \textit{conic programming} techniques~\cite{alizadeh2003second} and the $l1$-norm reweighting approximation methods from Compressive Sensing  which were originally proposed for sparse signal recovery \cite{l1_reweighted}, in order to quickly identify the optimal clustering decision and beamforming design.

We propose an iterative algorithm to solve the \emph{Dynamic-RC} problem. In each iteration, the clustering decision is temporarily fixed and a \emph{Cooperative Beamforming Design} (CBD) problem is solved using Second-Order Cone Programming (SOCP) technique. The optimal beamforming solution obtained from the CBD problem is used to adjust the clustering decision via $l1$-norm reweighting technique. As such, the joint clustering and beamforming design is quickly identified and is adaptive to the global network condition. 

Numerical simulations are carried out extensively in various user distribution scenarios and demonstrate that our proposed low-complexity \emph{Dynamic-RC} strategy significantly improves the WSRSU performance over conventional radio clustering and beamforming schemes. Furthermore, the results also show the great potential gains of C-RANs using our \emph{Dynamic-RC} strategy over distributed RANs in terms of computing resource and transmit power utilization. 

{\bf{Paper Organization:}}
The remainder of this paper is organized as follows: in Sect.~\ref{sec:formulation}, we present the considered system model and formulate the problem under study; in Sect.~\ref{sec:design1}, we discuss the analysis and solution to the cooperative beamforming design problem with a fixed clustering decision; in Sect.~\ref{sec:design2}, the \textit{Dynamic-RC} strategy via dynamic radio clustering and beamforming design is solved via our proposed iterative algorithm; simulation results are illustrated in Sect.~\ref{sec:result} and, finally, Sect.~\ref{sec:con} concludes the paper and points to future work.

\section{System Model and Problem Formulation}\label{sec:formulation}
In this section, we firstly introduce the system model of the considered downlink C-RAN system and discuss the computing-resource constraint. The proposed dynamic radio cooperation strategy is then formulated as a joint clustering and beamforming design problem. 

\subsection{System Model}
We consider a multi-user, multi-cell C-RAN downlink system, where each cell has one RRH that connects to a common VBS pool via high-capacity backhaul links. Let $\mathcal{R} = \{1,2,...,R\}$ be the set of RRHs and $\mathcal{U}=\{1,2,...,U\}$ be the set of active users in the system. We assume that each RRH $r$ has $N_r$ antennae while, realistically, all the users are equipped with only a single antenna. Note that the solutions proposed can be trivially extended to the multi-antenna-user case. The RRHs cooperate with each other to form virtual user-specific clusters, i.e., each RRH cluster is formed for a scheduled user, while each RRH can be part of multiple clusters. Hence, the number of virtual clusters is equal to the number of scheduled users in the system, which may be smaller than the number of total active users. Let $\mathcal{S}=\left\{ {s_u^r\left| {u \in \mathcal{U}, r \in \mathcal{R}} \right.} \right\}$ be the clustering decision, in which $s_u^r$ is a binary variable equal to $1$ if RRH $r$ is selected to serve user $u$, and $0$ otherwise. Consequently, let ${\mathcal{V}_u} = \left\{ {r \in \mathcal{R}\left| {s_u^r = 1} \right.} \right\}$ denote the serving cluster of user $u$. We consider the system in a single time-frequency resource block, which is considered to be spatially reused across all the users. As such, each RRH can simultaneously serve at most $N_r$ users; otherwise, the users will suffer from intra-cluster interference. 

We assume that each user has a single traffic flow that is independent of all other users' flows. Baseband signals for user $u$  and the corresponding downlink beamforming information after being processed at the VBS pool will be transported to all the RRHs in the serving cluster $\mathcal{V}_u$. In each scheduling slot, all the RRHs in $\mathcal{V}_u$ will jointly transmit the normalized symbol $x_u\in \mathbb{C}$ of unit power to user $u$. It is assumed that the signals for different users are independent from each other and from the receiver noise. Now, let ${\bf{w}}_u^r \in {\mathbb{C}^{{N_r} \times 1}}$ be the linear downlink beamforming vector at RRH $r$ corresponding to user $u$ and ${\bf{W}} = \left\{ {{\bf{w}}_u^r \left| \forall u \in \mathcal{U},r \in \mathcal{R}\right.} \right\}$ denote the network beamforming design. Note that ${\bf{W}}$ also implies the scheduling decision, i.e., user $u$ is not scheduled for the current time-frequency slot if ${\bf{w}}_u^r = {\bf{0}}, \forall r \in \mathcal{R}$. In the current scheduling slot, the received signal $y_u \in \mathbb{C}$ at user $u$ is, 
\begin{equation} \label{eq:yu}
{y_u} = \underbrace {\sum\limits_{r \in {{\cal V}_u}} {{\bf{h}}_u^r{\bf{w}}_u^r{x_u}} }_{{\rm{desired}}{\kern 1pt} {\rm{signal}}} + \underbrace {\sum\limits_{u' \in {\cal U},u' \ne u} {\sum\limits_{r' \in {{\cal V}_{u'}}} {{\bf{h}}_u^{r'}{\bf{w}}_{u'}^{r'}{x_{u'}}} } }_{{\rm{interference}}} + {z_u},
\end{equation}
where ${\bf{h}}_u^r \in {\mathbb{C}^{1 \times {N_r}}}$ is the channel coefficient vector from RRH $r$ to user $u$, $z_u$ is the zero-mean circularly symmetric Gaussian noise denoted as $\mathcal{CN}(0, \sigma^2)$. For simplicity, let ${\Psi _{u,u'}} = \sum\limits_{r' \in {\mathcal{V}_{u'}}} {{\bf{h}}_u^{r'}{\bf{w}}_{u'}^{r'}}$ and ${\Psi _u} = {\Psi _{u,u}} = \sum\limits_{r \in {\mathcal{V}_{u}}} {{\bf{h}}_u^{r}{\bf{w}}_{u}^{r}}$. With this position, the received Signal-to-Interference-plus-Noise Ratio (SINR) at user $u$ is,
\begin{equation} \label{eq:gamma_u}
{\gamma _u} = \frac{{{{\left| {{\Psi _u}} \right|}^2}}}{{\sum\limits_{u' \in {\cal U} ,u' \ne u} {{{\left| {{\Psi _{u,u'}}} \right|}^2} + {\sigma ^2}} }}.
\end{equation}

Thus, under the clustering decision ${\cal S}$ and the beamforming design ${\bf{W}}$, the Shannon transmission rate of user $u$ can be calculated as ${R_u}\left( {{\cal S},{\bf{W}}} \right) = \eta B{\log _2}\left( {1 + \mu {\gamma _u}} \right)$,
 in which $B~\mathrm{[Hz]}$ is the channel bandwidth and $\eta ,\mu  \in [0,1]$ account for the spectral and the coding efficiencies, respectively.  Unless otherwise stated, for notation simplicity in the subsequent analysis we will assume $B=\eta=\mu=1$ and consider the normalized rate ($\mathrm{bits/s/Hz}$). Hence, the rate $R_u$ simplifies to,
 \vspace*{-0.1cm}
\begin{equation} \label{eq:Ru}
{R_u}\left( {{\cal S},{\bf{W}}} \right) = {\log _2}\left( {1 + {\gamma _u}} \right).
\end{equation}

\textit{\underline{Computing resource constraint:}} The VBS pool consists of a set of interconnected VBSs hosted in the physical-server infrastructure of a datacenter. Each VBS performs baseband processing for a certain set of users, and by leveraging virtualization technology, these VBSs can flexibly share the common computing resource of the physical server pool. Recently, the implementation of software VBSs on General-Purpose Platform (GPP) has been realized (see, for example,~\cite{bhaumik2012cloudiq, madhavan2012quantifying}). Profiling results on these systems have revealed that the utilized computing resource at a VBS is an increasing function of the accumulated data rates processed by that VBS. Therefore, it is reasonable to argue that the total computing-resource capacity of the VBS pool places a \emph{cap} on the total data rates of the users in the network. In general, the computing-resource capacity of the VBS pool can be modeled as a multi-dimentional vector representing the capacities of the CPUs, memory, and network interfaces. However, for the ease of analysis, we only consider scalar computing capacity in this paper. In particular, let $C$ denote the total computing capacity in the VBS pool that can be flexibly shared among all the VBSs. The computing-resource constraint on the accumulated data rate of all the users in the system can be expressed as 
\vspace*{-0.1cm}
\begin{equation} \label{eq:centralized}
\Gamma \left(  {\sum\limits_{u \in \mathcal{U}} {{R_u}} } \right) \le C,
\end{equation}
where $R_u$ is the data rate of user $u$ given in \eqref{eq:Ru} and $\Gamma(.)$ is an increasing function specifying the relationship between the utilized computing resource and the accumulated user data rate\footnote{The realization of $\Gamma(.)$ can be obtained by carefully profiling the VBSs at different level of offered load in a practical C-RAN implementation.}. It should be noted that for a traditional system with distributed computing resource at the RRHs, the accumulated data rate processed at each RRH $r$ will be subject to the per-RRH computing-resource constraint $C_r<C$, i.e., 
\vspace*{-0.1cm}
\begin{equation} \label{eq:distributed}
\Gamma \left( {\sum\limits_{u \in {\mathcal{U}}} {{s_u^rR_u}} } \right) \le {C_r}, \forall r \in \mathcal{R}.
\end{equation}

\subsection{Joint Clustering and Beamforming Problem Formulation}
Our objective is to maximize the WSRSU under the transmit power constraint at each RRH and the total computing-resource constraint at the VBS pool. It is assumed that the capacity of the front-haul links connecting RRHs to the VBS pool is sufficiently provisioned to accommodate peak-capacity demand. Our proposed dynamic radio cooperation strategy involves finding the optimal clustering decision $\mathcal{S}^*$ and the optimal beamforming design $\mathbf{W}^*$, and can be formulated as,
\begin{subequations} \label{eq:prob1}
\begin{align}
\label{eq:prob1a}
\left( {\mathcal{S}^*,{\bf{W}}^*} \right) = \mathop  {\text{argmax}}\limits_{\scriptstyle\left\{ {s_u^r,{\bf{w}}_u^r} \right\}\hfill\atop
\scriptstyle r \in \mathcal{R},u \in \mathcal{U}\hfill} &\sum\limits_{u \in \mathcal{U}} {{q_u}{R_u}\left( {{\cal S},{\bf{W}}} \right)} \\
\label{eq:prob1b}
\text{s.t.} \hspace{0.5cm} &\sum\limits_{u \in \mathcal{U}} {\left\| {{\bf{w}}_u^r} \right\|_2^2}  \le {P_r},\forall r \in \mathcal{R},\\
\label{eq:prob1b_1}
&\left\| {{\bf{w}}_u^r} \right\|_2^2 \le s_u^r{P_r},\\
\label{eq:prob1c}
&\sum\limits_{u \in \mathcal{U}} {{R_u}}\left( {{\cal S},{\bf{W}}} \right)  \le \Omega ,\\
\label{eq:prob1d}
&\sum\limits_{u \in \mathcal{U}} {s_u^r}  \le {N_r}, s_u^r \in \left\{ {0,1} \right\},
\end{align}
\end{subequations}where $q_u$, $u \in \mathcal{U}$, is the utility marginal function corresponding to user $u$, which can represent the user-specific Quality of Service (QoS) or priority in the system, $P_r~\mathrm{[W]}$ is the per-RRH transmission power constraint and $\Omega =\arg \Gamma(C)$. Constraint~\eqref{eq:prob1b_1} indicates the coupling between the assignment variable $s_u^r$ and the beamforming vector ${\bf{w}}_u^r$, i.e., ${\bf{w}}_u^r = {\bf{0}}$ when $s_u^r=0$.
We refer to~\eqref{eq:prob1} as the dynamic radio cooperation (\textit{Dynamic-RC}) problem. In fact, this is a Mixed-Integer Non-Linear Program (MINLP), which is intractable in practical time. Specifically, even when the binary variables $s_u^r$ are fixed, solving for ${\bf{w}}_u^r$ is still NP-hard. 

Given a large number of variables that scales linearly with the number of users and RRHs in the system, finding a low-complexity, suboptimal solution is highly desirable. To this end, we firstly solve the Cooperative Beamforming Design (CBD) problem with given clustering decision $\mathcal{S}$, and propose a low-complexity iterative algorithm to solve the \textit{Dynamic-RC} problem to a local optimum. Specifically, in Sect.~\ref{sec:design1}, we will transform the CBD problem into a SOCP with a fixed clustering decision, and will take advantage of the existing efficient SOCP algorithms. The \textit{Dynamic-RC} problem will then be solved in Sect.~\ref{sec:design2} using the iterative $l1$-norm reweighting technique, which solves the CBD problem and updates the clustering decision in each iteration.

\vspace*{-.3cm}
\section{Cooperative Beamforming with Fixed Clustering Decision}

\label{sec:design1}
In this section, we consider the problem of Cooperative Beamforming Design~(CBD) for a given radio clustering decision $\mathcal{S}$. In particular, for given $\left\{ {s_u^r} \right\}$ satisfying constraints~\eqref{eq:prob1d}, we need to find the optimal downlink beamformers $\left\{ {{\bf{w}}_u^r} \right\}$ by solving the CBD problem below,
\begin{subequations} \label{eq:static}
\begin{align}
\label{eq:static_a}
\mathop {\max }\limits_{{\bf{w}}_u^r,r \in \mathcal{R},u \in \mathcal{U}}  &\sum\limits_{u \in \mathcal{U}} {{q_u}{R_u}}\left( {{\cal S},{\bf{W}}} \right) \\
\label{eq:static_b}
\text{s.t.} \hspace{0.5cm} &\sum\limits_{u \in \mathcal{U}} {\left\| {{\bf{w}}_u^r} \right\|_2^2}  \le {P_r},\forall r \in \mathcal{R},\\
\label{eq:static_c}
&\sum\limits_{u \in \mathcal{U}} {{R_u}}\left( {{\cal S},{\bf{W}}} \right)  \le \Omega.
\end{align}
\end{subequations}Observe that the rate functions $R_u$'s appear in \emph{both} the constraint and objective of \eqref{eq:static}, making the problem difficult to deal with. To decouple this problem with respect to (w.r.t.) $R_u$'s, we remove the constraint~\eqref{eq:static_c} and consider the \emph{relaxed}-CBD problem with constraint~\eqref{eq:static_b} only. The solution $\left\{ {{\bf{\tilde w}}_u^r} \right\}$ of the \emph{relaxed}-CBD problem will be verified against constraint~\eqref{eq:static_c} so to finally obtain the solution of the original CBD problem by solving an additional \emph{feasibility} problem. In the following subsections, the \emph{relaxed}-CBD first and then the \emph{feasibility} problem will be addressed sequentially.

\subsection{Relaxed-CBD Problem}
The \emph{relaxed}-CBD problem is rewritten from \eqref{eq:static} without the computing-capacity constraint~\eqref{eq:static_c}, and is cast as follows,
\begin{subequations} \label{eq:power}
\begin{align}
\label{eq:power_a}
\mathop {\max }\limits_{{\bf{w}}_u^r,r \in \mathcal{R},u \in \mathcal{U}}  &\sum\limits_{u \in \mathcal{U}} {{q_u}{R_u}}\left( {{\cal S},{\bf{W}}} \right) \\
\label{eq:power_b}
\text{s.t.} \hspace{0.5cm} &\sum\limits_{u \in \mathcal{U}} {\left\| {{\bf{w}}_u^r} \right\|_2^2}  \le {P_r},\forall r \in \mathcal{R}.
\end{align}
\end{subequations}This is in fact a weighed sum-rate maximization problem, which is widely known to be NP-hard. Our approach aims for a local solution using a low-complexity algorithm designed by effectively exploiting the techniques of SOCP.\footnote{Second-Order Cone Problems (SOCP) are convex-optimization problems in which a linear function is minimized over the intersection of an affine set and the product of second-order (quadratic) cones.} In order to use the efficient algorithms developed for SOCP, one must reformulate the problem into the standard form that the algorithms (e.g., those proposed in~\cite{tran2012fast}) are capable of dealing with. Firstly, from~\eqref{eq:Ru}, objective function~\eqref{eq:power_a} is rewritten as,
\begin{equation} 
\sum\limits_{u \in \mathcal{U}} {{q_u}{R_u}}\left( {{\cal S},{\bf{W}}} \right)  = \sum\limits_{u \in \mathcal{U}} {{{\log }_2}{{\left( {1 + {\gamma _u}} \right)}^{{q_u}}}}.
\end{equation}Now, by introducing the variables $t_u$'s, $u \in \mathcal{U}$, we can recast the \emph{relaxed}-CBD problem in~\eqref{eq:power} as, 
\begin{subequations}\label{eq:power1}
\begin{align}\label{eq:power1_a}
\mathop {\max }\limits_{{\bf{w}}_u^r,r \in \mathcal{R},u \in \mathcal{U}}  &\prod\limits_{u \in \mathcal{U}} {{t_u}}\\
\label{eq:power1_b}
\text{s.t.} \hspace{0.5cm} &{\gamma _u} \ge t_u^{1/{q_u}} - 1,\forall u \in \mathcal{U}, \\
\label{eq:power1_c}
&\sum\limits_{u \in \mathcal{U}} {\left\| {{\bf{w}}_u^r} \right\|_2^2}  \le {P_r},\forall r \in \mathcal{R},
\end{align}
\end{subequations}which stems from the fact that constraints~\eqref{eq:power1_b} are active at the optimum. We now have the following Lemma. 
\begin{lemma}
Let ${\bf{\tilde w}}_u^r = {\bf{w}}_u^r{e^{j\phi _u^r}}$, where $\phi_u^r$ is the phase rotation such that the imaginary part of ${\bf{h}}_u^r{\bf{\tilde w}}_u^r$ equals to zero, $\forall u \in \mathcal{U}, r \in \mathcal{R}$. If ${{\bf{w}}_u^r}$ is optimal to (\ref{eq:power1}), then ${\bf{\tilde w}}_u^r$ is also optimal.
\end{lemma}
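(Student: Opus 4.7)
The plan is to show that $\tilde{\mathbf{W}} = \{\tilde{\mathbf{w}}_u^r\}$ is feasible for (\ref{eq:power1}) and attains an objective value no smaller than the optimum attained by $\mathbf{W}$, so that by the assumed optimality of $\mathbf{W}$ the rotated $\tilde{\mathbf{W}}$ must itself be optimal. Only two things require checking: the per-RRH power constraint, and that the SINR $\gamma_u$ does not decrease (since this is the only way the beamformers enter the SOCP constraint (\ref{eq:power1_b}) and hence the objective $\prod_u t_u$).

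The power constraint is immediate. Because $|e^{j\phi_u^r}| = 1$, we have $\|\tilde{\mathbf{w}}_u^r\|_2^2 = \|\mathbf{w}_u^r\|_2^2$ for every $(u,r)$, so constraint (\ref{eq:power1_c}) is satisfied term by term. No coupling across indices is broken.

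For the SINR I would analyze numerator and denominator separately. By the defining property of $\phi_u^r$, each $\mathbf{h}_u^r \tilde{\mathbf{w}}_u^r$ is real, and choosing the branch of the phase that also makes it non-negative gives $\mathbf{h}_u^r \tilde{\mathbf{w}}_u^r = |\mathbf{h}_u^r \mathbf{w}_u^r|$. The triangle inequality then yields
\begin{equation*}
|\tilde{\Psi}_u|^2 = \biggl(\sum_{r \in \mathcal{V}_u} |\mathbf{h}_u^r \mathbf{w}_u^r|\biggr)^{\!2} \;\ge\; \biggl|\sum_{r \in \mathcal{V}_u} \mathbf{h}_u^r \mathbf{w}_u^r\biggr|^{2} = |\Psi_u|^2,
\end{equation*}
so the desired-signal power can only grow under the rotation.

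The main obstacle is the denominator. The interference magnitude $|\tilde{\Psi}_{u,u'}|^2 = |\sum_{r'} \mathbf{h}_u^{r'} \mathbf{w}_{u'}^{r'} e^{j\phi_{u'}^{r'}}|^2$ depends on the phases $\phi_{u'}^{r'}$, and those phases are aligned to the desired channels $\mathbf{h}_{u'}^{r'}$ rather than to the cross channels $\mathbf{h}_u^{r'}$, so in general this squared magnitude can change. To close the gap I would invoke the assumed optimality of $\mathbf{W}$ in a contradiction argument: were the rotation to strictly shrink any $\gamma_u$, one could start from $\tilde{\mathbf{W}}$ and reverse only the detrimental component of the per-user phase, producing a feasible point whose numerators are still improved but whose denominators are no worse than those of $\mathbf{W}$, strictly dominating $\mathbf{W}$ in (\ref{eq:power1_a}) and contradicting its optimality. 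Consequently $\tilde{\gamma}_u \ge \gamma_u$ for every $u$, so the same $\{t_u\}$ remain feasible in (\ref{eq:power1_b}) under $\tilde{\mathbf{W}}$, and the objective value is preserved, establishing optimality of $\tilde{\mathbf{W}}$.
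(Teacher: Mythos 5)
Your feasibility check and the triangle-inequality bound on the numerator are both fine, and you have put your finger on exactly the right difficulty: with a separate phase $\phi_{u'}^{r'}$ for each RRH--user pair, the cross terms $\tilde\Psi_{u,u'}=\sum_{r'\in\mathcal{V}_{u'}}\mathbf{h}_u^{r'}\mathbf{w}_{u'}^{r'}e^{j\phi_{u'}^{r'}}$ genuinely change, because the phases are matched to $\mathbf{h}_{u'}^{r'}$ and not to the cross channels $\mathbf{h}_u^{r'}$. The step that fails is your attempted repair. ``Reversing only the detrimental component of the per-user phase'' is not a well-defined operation: each $\phi_{u'}^{r'}$ is a single scalar that simultaneously controls the numerator of $\gamma_{u'}$ and the contribution of RRH $r'$ to the interference at every other user, so there is no decomposition into a harmful part and a harmless part that can be undone separately; any partial reversal also partially undoes the numerator alignment. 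The logic of the contradiction is also off: even if such a point existed, with numerators no smaller and denominators no larger than those of $\mathbf{W}$, it would only \emph{weakly} dominate $\mathbf{W}$, which does not contradict optimality --- a weakly dominating feasible point is precisely what the lemma asks you to exhibit, not something to derive a contradiction from. So $\tilde\gamma_u\ge\gamma_u$ is not established, and it is not generally true: if $\mathbf{W}$ cancels the interference at some user by destructive combination across the RRHs of another user's cluster, the per-RRH rotation destroys that cancellation and strictly increases that user's interference.

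For comparison, the paper's own proof is a one-liner that chooses $\phi_u^r=-\theta_u^r$ and then asserts that the substitution ``will result in the same objective function and constraints.'' That assertion holds for the power constraints and for each individual magnitude $|\mathbf{h}_u^{r'}\mathbf{w}_{u'}^{r'}|$, but not for the coherent sums $\Psi_{u,u'}$ --- i.e., the paper silently skips over the very obstacle you identified. The standard, rigorous fix is to weaken the rotation to a single common phase per user, $\phi_u^r=\phi_u=-\arg(\Psi_u)$ for all $r\in\mathcal{V}_u$: then $\tilde\Psi_u=|\Psi_u|\ge0$ while $|\tilde\Psi_{u,u'}|=|e^{j\phi_{u'}}\Psi_{u,u'}|=|\Psi_{u,u'}|$ exactly, so every SINR, every constraint, and the objective are preserved and optimality transfers immediately. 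This yields only the weaker conclusion that the aggregate $\Psi_u$ is real and non-negative rather than each $\mathbf{h}_u^r\tilde{\mathbf{w}}_u^r$ individually, but that weaker property is all the subsequent SOC reformulations \eqref{eq:constraint1} and \eqref{eq:soc_3} actually use. If you restate and prove the lemma with the common per-user phase, your feasibility argument goes through verbatim and the gap disappears.
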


\begin{proof}
We can represent ${\bf{h}}_u^r{\bf{w}}_u^r$ as ${\bf{h}}_u^r{\bf{w}}_u^r = \left| {{\bf{h}}_u^r{\bf{w}}_u^r} \right|{e^{j\theta _u^r}}$. By choosing $\phi_u^r = - \theta_u^r$, we have ${\bf{h}}_u^r{\bf{\tilde w}}_u^r = {\bf{h}}_u^r{\bf{w}}_u^r{e^{j\phi _u^r}} = \left| {{\bf{h}}_u^r{\bf{w}}_u^r} \right|$. Recall $\gamma_u$ given in~\eqref{eq:gamma_u}, it is straightforward to verify that substituting ${\bf{w}}_u^r$ by ${\bf{\tilde w}}_u^r$, $\forall u \in \mathcal{U}, r \in \mathcal{R}$, into (\ref{eq:power1}) will result in the same objective function and constraints. Thus, if ${\bf{w}}_u^r$ is optimal then ${\bf{\tilde w}}_u^r$ is also optimal.
\end{proof}


Using Lemma 1, we can restrict ourselves to the beamformers in which ${\bf{h}}_u^r{\bf{w}}_u^r \ge 0$, $\forall u \in \mathcal{U}, r \in \mathcal{V}_u$, where each product has a non-negative real part and a zero imaginary part. Notice that constraint~\eqref{eq:power1_b} is equivalent to
\begin{equation}
\frac{{{{\left| {{\Psi _u}} \right|}^2}}}{{\sum\limits_{u' \in {\cal U} ,u' \ne u} {{{\left| {{\Psi _{u,u'}}} \right|}^2} + {\sigma ^2}} }} \ge t_u^{1/{q_u}} - 1, \forall u \in \mathcal{U},
\end{equation}which can be recast as,
\begin{align} \label{eq:constraint1}
&\Psi_u \ge {\beta _u}\sqrt {t_u^{1/{q_u}} - 1} ,\forall u \in \mathcal{U},\\ \label{eq:constraint2}
 \hspace*{-0.5cm} \text{and} \hspace*{0.5cm} &\sqrt{{{\sum\limits_{u' \in \mathcal{U},u' \ne u} {{{\left| {\Psi_{u,u'}} \right|}^2} + {\sigma ^2}} } }} \le {\beta _u},\forall u \in \mathcal{U},
\end{align}by introducing the slack variables $\beta_u$'s and due to the fact that both constraints~\eqref{eq:constraint1} and \eqref{eq:constraint2} are active at the optimum of problem~\eqref{eq:power1}. It can be verified that~\eqref{eq:power1_c} and \eqref{eq:constraint2} follow the Linear Programming (LP) constraint expression with generalized equalities/inequalities, which can be directly written as Second-Order Constraints~(SOCs)\footnote{In a SOC representation, the hyperbolic constraint $ab \geq c^2$, with $a,b \geq 0$, is equivalent to $||{[(a - b)\,\,\,\,\,2c]^T}|{|_2} \le a + b$.}~\cite{alizadeh2003second}. To deal with the non-convex constraint~\eqref{eq:constraint1}, we further exploit the sequential parametric convex-approximation approach in~\cite{beck2010sequential} to approximate~\eqref{eq:constraint1} as convex as presented in the following. 

Firstly, \eqref{eq:constraint1} can be rewritten as,
\begin{align} \label{eq:16}
&\Psi_u \ge {\beta _u}\sqrt {\xi _u}, \forall u \in \mathcal{U}, \\ \label{eq:17}
&{\xi _u} + 1 \ge t_u^{1/{q_u}}, \forall u \in \mathcal{U}.
\end{align}Observe that, for a given $\phi _u$, we have
\begin{equation} \label{eq:xi}
\beta _u\sqrt {\xi _u}  \le \frac{{{\phi _u}}}{2}\beta _u^2 + \frac{{{\xi _u}}}{{2{\phi _u}}},
\end{equation}which follows the inequality of arithmetic and geometric means of 
${\phi _u}\beta _u^2$ and ${\xi _u}\phi _u^{ - 1}$. The equality in~\eqref{eq:xi} is achieved when $\phi _u = \sqrt {\xi _u} /{\beta _u}$, and we get the equivalent form of constraint~\eqref{eq:16} as,
\begin{equation}  \label{eq:19}
\Psi _u - \frac{{{\xi _u}}}{{2{\phi _u}}} \ge \frac{{{\phi _u}}}{2}\beta _u^2, \forall u \in \mathcal{U}.
\end{equation}

Furthermore, without loss of generality, we scale $q_u$'s in~\eqref{eq:static_a} such that $q_u>1, \forall u \in \mathcal{U}$ to make $t_u^{1/q_u}$ become concave. Thanks to the concavity of $t_u$'s, we can adopt the results in~\cite{beck2010sequential} to replace the right side of~\eqref{eq:17} by its iterative first-order approximation as, 
%
\begin{equation} \label{eq:20}
t_u^{1/{q_u}} \le t_u^{{{(*)}^{1/{q_u}}}} + \frac{1}{{{q_u}}}t_u^{{{(*)}^{\left( {1/{q_u}} \right) - 1}}}\left( {{t_u} - t_u^{(*)}} \right),
\end{equation}where ${t_u^{(*)}}$ denotes the value of $t_u$ in the previous iteration. From~\eqref{eq:constraint2}, \eqref{eq:19}, and \eqref{eq:20}, the \emph{relaxed}-CBD optimization problem in~\eqref{eq:power} can be finally recast as,
\begin{subequations} \label{eq:power2}
\begin{align}
\label{eq:power2_a}
&\mathop {\max }\limits_{{\bf{w}}_u^r,r \in \mathcal{R},u \in \mathcal{U}}  \prod\limits_{u \in \mathcal{U}} {{t_u}}\\
\label{eq:power2_b}
\text{s.t.} \hspace{0.5cm}  &\sum\limits_{u \in \mathcal{U}} {\left\| {{\bf{w}}_u^r} \right\|_2^2}  \le {P_r},\forall r \in \mathcal{R},\\
\label{eq:power2_c}
&(\ref{eq:constraint2}), (\ref{eq:19}), (\ref{eq:20}).
\end{align}
\end{subequations}

Notice that the objective function and all the constraints in~\eqref{eq:power2} admit SOC representation (see~\cite{tran2012fast,alizadeh2003second}). Consequently, the resulting problem in~\eqref{eq:power2} is a SOCP, which can be solved efficiently and very fast using standard solvers such as CPLEX~\cite{cplex} or MOSEK~\cite{mosek}.

\subsection{CBD Feasibility Problem}
Here, the solution of the \emph{relaxed}-CBD problem~\eqref{eq:power} which was obtained via solving the equivalent SOCP problem in~\eqref{eq:power2}, will be verified against the computing-capacity constraint in~\eqref{eq:static_c} to obtain finally the beamforming solution of the original CBD problem cast in~\eqref{eq:static}.

%

Suppose that ${\bf{\tilde W}}$ is the beamforming solution of problems~\eqref{eq:power}. If ${\bf{\tilde W}}$ satisfies the computing-resource constraint \eqref{eq:static_c}, i.e., $\sum\limits_{u \in \mathcal{U}} {{R_u}\left( {S,{\bf{\tilde W}}} \right)}  \le \Omega $, then ${\bf{\tilde W}}$ is also the optimal solution of ~\eqref{eq:static}. In this case, the WSRSU is limited by the per-RRH power budget only, and not by the computing-resource capacity of the VBS pool. On the other hand, when the computing-resource constraint is violated, we need to selectively drop the rates of some users. This can be done via a greedy algorithm that keeps dropping the users that have the smallest marginal utility function $q_u$ from the current scheduling interval until the total data rate of all the scheduled users satisfies the computing-resource constraint. Since the optimal bearmformer design $\mathbf{W}$ is \emph{jointly} calculated for all users, dropping the rates of some users requires recalculating the beamformers of \emph{all} the RRHs.


Let $\left\{ {R_u^* \geq 0,u \in \mathcal{U}} \right\}$ be the user rates obtained after the greedy-user-rate-dropping process is applied; the beamformer design $\mathbf{W}$ that achieves these rates can be obtained via solving the feasibility problem given below,
\begin{subequations} \label{eq:fe_prob}
\begin{align}
\label{eq:fe_prob_a}
\text{find} \hspace{0.5cm} &\left\{ {{\bf{w}}_u^r} \right\}, u \in \mathcal{U}, r \in \mathcal{V}_u \\ 
\label{eq:fe_prob_b}
\text{s.t.} \hspace{0.5cm} &\sum\limits_{u \in \mathcal{U}} {\left\| {{\bf{w}}_u^r} \right\|_2^2}  \le {P_r},\forall r \in \mathcal{R},\\
\label{eq:fe_prob_c}
 &\frac{{{{\left| {{\Psi _u}} \right|}^2}}}{{\sum\limits_{u' \in {\cal U} ,u' \ne u} {{{\left| {{\Psi _{u,u'}}} \right|}^2} + {\sigma ^2}} }}  \ge \gamma _u^*, \forall u \in \mathcal{U},
\end{align}
\end{subequations}where $\gamma _u^* = 2^{R_u^*} - 1$.

The feasibility problem in~\eqref{eq:fe_prob} is \emph{not convex}; however, by exploiting its special structure, we can transform this problem into a SOCP form, which can be solved efficiently. The transformation is presented as follows. Firstly, let ${{\bf{w}}^r}$ be the long column vector such that ${{\bf{w}}^r} = {\left[ {{{\left( {{\bf{w}}_1^r} \right)}^T},{{\left( {{\bf{w}}_2^r} \right)}^T},...{{\left( {{\bf{w}}_U^r} \right)}^T}} \right]^T}$, $\forall r \in \cal R$. Constraint~\eqref{eq:fe_prob_b} can be rewritten in a SOC form as
\begin{equation} \label{eq:soc_1}
{\left\| {{{\bf{w}}^r}} \right\|_2} \le \sqrt {{P_r}} ,\forall r \in {\cal R}.
\end{equation}

%
%
Furthermore, \eqref{eq:fe_prob_c} is equivalent to 
\begin{align} \label{eq:soc_2}
\left( {1 + \frac{1}{{\gamma _u^*}}} \right){\left| {\Psi_u} \right|^2} \ge \sum\limits_{u' \in \mathcal{U}} {{{\left| {\Psi_{u,u'}} \right|}^2} + {\sigma ^2}} ,{\rm{ }}\forall r \in {\cal R}.
\end{align}Since ${\bf{h}}_u^r{\bf{w}}_u^r \ge 0$, as we considered previously, we can take the square root of both sides in \eqref{eq:soc_2}, which yields, 
\begin{align}  \label{eq:soc_3}
\Psi_u\sqrt {1 + \frac{1}{{\gamma _u^*}}}  \ge \sqrt {\sum\limits_{u' \in \cal U} {{{\left| {\Psi_{u,u'}} \right|}^2} + {\sigma ^2}} }= {\left\| {\left[ {{\Psi _{u,1}},...{\Psi _{u,U}},\sigma } \right]} \right\|_2}.
\end{align}
It can be seen that~\eqref{eq:soc_3} follows the SOC form; hence, using~\eqref{eq:soc_1} and \eqref{eq:soc_3}, we are now ready to recast the feasibility problem in~\eqref{eq:fe_prob} in the standard SOCP form as follows,
\begin{subequations} \label{eq:socp_prob}
\begin{align}
\label{eq:socp_prob_a}
&\text{find} \hspace{0.5cm} \left\{ {{\bf{w}}_u^r} \right\}, u \in \mathcal{U}, r \in \mathcal{V}_u \\ 
\label{eq:socp_prob_b}
\text{s.t.} \hspace{0.3cm} &\left\| {{{\bf{w}}^r}} \right\|_2 \le \sqrt {{P_r}}, \forall r \in \mathcal{R},\\
&{\left\| {\left[ {{\Psi _{u,1}},...{\Psi _{u,U}},\sigma } \right]} \right\|_2} \le \Psi_u\sqrt {1 + \frac{1}{{\gamma _u^*}}},
\label{eq:socp_prob_c}
\end{align}
\end{subequations}
The solution ${{\bf{W}}^*}$ for~\eqref{eq:socp_prob} can be obtained using standard SOCP techniques such as the interior-point methods~\cite{ye2011interior} or the SOCP solvers (e.g., CPLEX, MOSEK). In summary, the optimal beamformer design of the CBD problem in~\eqref{eq:static} for a given radio clustering decision $\mathcal{S}$ can be obtained by the procedures described in Algorithm~\ref{algo:CBD}.
\begin{algorithm}\label{CBD}
\caption{Cooperative Beamformer Design (CBD).}\label{algo:CBD}
\begin{algorithmic}
\State \textbf{(1)} Solve the SOCP problem in~\eqref{eq:power2} to find ${{\bf{\tilde W}}}$
\State \textbf{(2)} Verify constraint~\eqref{eq:static_c}
	\begin{itemize}
		\item \textit{If} $\sum\limits_{u \in {\cal U}} {{R_u}\left( {{\cal S},{\bf{\tilde W}}} \right)}  \le \Omega $, \textit{return} 
		$\bf{W}^*$=$\bf{\tilde W}$. 
		\item \textit{Otherwise}: Drop users' rates using the greedy algorithm
			\begin{itemize}
				\item \textit{Repeat}: Update ${R_{u'}}\left( {S,{\bf{\tilde W}}} \right) = {R_{u'}}\left( {S,{\bf{\tilde W}}} \right) - \tau$, where $\tau$ is small decreasing step and ${q_{u'}}$=$\min \left\{ {{q_u}:{q_u} > 0,u \in \mathcal{U}} \right\}$. Go to the next user when ${R_{u'}}\left( {S,{\bf{\tilde W}}} \right)=0$.
				\item \textit{Until}: $\sum\limits_{u \in {\cal U}} {{R_u}\left( {{\cal S},{\bf{\tilde W}}} \right)}  \le \Omega $
				\item Solve feasibility problem~\eqref{eq:fe_prob}, get ${{\bf{W}}^*}$. \textit{Return}
			\end{itemize}
	\end{itemize}
\end{algorithmic}
\end{algorithm}

\section{Joint Dynamic Radio Clustering and Beamforming Design}\label{sec:design2}
In the previous section, the CBD problem has been transformed into an equivalent SOCP form. As a result, our considered \textit{Dynamic-RC} problem in~\eqref{eq:prob1} can also be transformed into a Mixed-Integer SOCP (MI-SOCP) problem with binary variables $s_u^r$'s. In a network with $U$ users and $R$ RRHs, there are $2^{UR}$ possible clustering patterns. The optimal solution to the clustering decision can be found via exhaustive search or using standard global optimization solvers. However, these approaches usually have a complexity growing exponentially with the problem size, which is not a practical approach. Hence, in this section, we present a method to solve the \textit{Dynamic-RC} problem given in~\eqref{eq:prob1} by \emph{iteratively} solving the CBD problem using Algorithm~\ref{algo:CBD}. In particular, we take advantage of the $l1$-norm reweighting technique to adjust the approximation of the clustering variables after each iteration. 

Firstly, given the relationship of $s_u^r$ and ${\bf{w}}_u^r$, we can represent $s_u^r$ by $l0$-norm expression of ${\bf{w}}_u^r$ as follows,
\begin{equation} \label{eq:l0}
s_u^r = {\left\| {\left\| {{\bf{w}}_u^r} \right\|_2^2} \right\|_0}.
\end{equation}The above expression allows us to leverage the $l1$-norm reweighting technique, which has been effectively applied in the literature to approximate the $l0$-norm~\cite{l1_reweighted}, i.e., ${\left\| \chi  \right\|_0} \approx \sum\limits_k {{\rho _k}{\chi _k}}$,
where $\chi  \in {\mathbb{R}^n}$ and $\rho _1,\rho _2,...,\rho _n$ are positive weights. With $n=1$, by choosing $\chi _k=\left\| {\bf{w}}_u^r \right\|_2^2$, we get $s_u^r \approx \rho _u^r\left\| {{\bf{w}}_u^r} \right\|_2^2$,
in which the weight $\rho_u^r$ is adjusted iteratively as
\begin{equation} \label{eq:rho_update}
\rho _u^r = \frac{1}{{\left\| {{\bf{\hat w}}_u^r} \right\|_2^2 + \epsilon }}, \forall u \in \mathcal{U}, r \in \mathcal{R},
\end{equation}with ${\left\| {\bf{\hat w}_u^r} \right\|_2^2}$ obtained from the previous iteration. In~\eqref{eq:rho_update}, the parameter $\epsilon$ is a very small positive number introduced to provide stability and to ensure that in case $\left\| {\bf{w}}_u^r \right\|_2^2 = 0$, it does not strictly prohibit a non-zero estimate in the next iteration. 


The \textit{Dynamic-RC} problem in~\eqref{eq:prob1} -- given now $\rho_u^r$'s -- can be rewritten as,
\begin{subequations} \label{eq:approx_prob}
\begin{align}
\label{eq:approx_prob_a}
\mathop {\max }\limits_{{\bf{w}}_u^r,r \in \mathcal{R},u \in \mathcal{U}}  &\sum\limits_{u \in \mathcal{U}} {{q_u}{R_u}}\left( {{\cal S},{\bf{W}}} \right) \\
\label{eq:approx_prob_b}
\text{s.t.} \hspace{0.5cm} &\sum\limits_{u \in \mathcal{U}} {\left\| {{\bf{w}}_u^r} \right\|_2^2}  \le {P_r},\forall r \in \mathcal{R},\\
\label{eq:approx_prob_d}
&\sum\limits_{u \in \mathcal{U}} {{R_u}\left( {{\cal S},{\bf{W}}} \right)}  \le \Omega ,\\
\label{eq:approx_prob_e}
&\sum\limits_{u \in \mathcal{U}} {\rho _u^r\left\| {{\bf{w}}_u^r} \right\|_2^2}  \le {N_r}.
\end{align}
\end{subequations}Note that constraint~\eqref{eq:approx_prob_e} can be written in SOC form as,
\begin{equation} \label{eq:soc_constraint}
{\left\| {\left[ {{\bf{w}}_1^r\sqrt {\rho _1^r} ,...,{\bf{w}}_U^r\sqrt {\rho _U^r} } \right]} \right\|_2} \le \sqrt {{N_r}}, \forall r \in {\cal R}.
\end{equation}
Thus, the problem in~\eqref{eq:approx_prob} is similar to the CBD problem in~\eqref{eq:static} with the additional SOC constraint~\eqref{eq:soc_constraint}, which can be solved efficiently using Algorithm~\ref{algo:CBD}. To clarify the idea, we present the iterative method to solve the \textit{Dynamic-RC} problem in Algorithm~\ref{algo:iterative} below.

\begin{algorithm}
\caption{Dynamic Radio Cooperation via Iterative SOCP}\label{algo:iterative}
\begin{algorithmic}
\State \textbf{(1) Initialization:} set $\rho _u^r=0$, $\forall u \in \mathcal{U}, r \in \mathcal{R}$
\State \textbf{(2) Iteration:} 
	\begin{itemize}
		\item[a)] Solve problem~\eqref{eq:approx_prob} with the current value of $\rho _u^r$ using Algorithm~\ref{algo:CBD}. In particular, Step~(1) in Algorithm~\ref{algo:CBD} will solve problem~\eqref{eq:power2} with the additional constraint~\eqref{eq:soc_constraint}.
		\item[b)] Update the weights $\rho_u^r$'s using the solution ${\bf{\hat w}}_u^r$'s obtained in the previous step as,
		\begin{align} \label{eq:update_weight}
			\rho _u^r={\left( {\left\| {{\bf{\hat w}}_u^r} \right\|_2^2 + \epsilon } \right)^{ - 1}}, \forall u \in \mathcal{U}, r\in \mathcal{R}.
		\end{align}		
	\end{itemize}
\State \textbf{(3) Check convergence:} Repeat Step~(2) until convergence or the max number of iterations is reached. 
\end{algorithmic}
\end{algorithm}
Note that RRH $r$ is included in the serving cluster of user $u$, i.e., $r \in \mathcal{V}_u$, if the beamformer from RRH $r$ to user $u$, ${\bf{w}}_u^r$, is nonzero. Since $\rho_u^r = 0$ in the first iteration in Algorithm 2, the constraint (\ref{eq:soc_constraint}) is automatically satisfied. Thus, initially each RRH can be selected into more than $N_r$ clusters. After that, the weights $\left\{ {\rho _u^r} \right\}$ are updated inversely proportional to the beamforming power as in (\ref{eq:update_weight}). Therefore, among the beamformers from all the RRHs to a target user, those with highest powers are most likely to be identified as nonzero in the next iteration.
This allows for successive better estimation of the clustering decision, i.e., identifying the nonzero beamformers from RRHs to users. As will be shown later in our simulation results, the beamforming powers quickly converge within a few iterations.

\emph{\underline{Complexity analysis}: }The computational complexity of Algorithm~\ref{algo:iterative} mainly lies in Step~(2a) where a SOCP problem is solved. Assuming the same number of antennae $N_r$ on the RRHs, the total number of variables in this SOCP problem is $URN_r$, where $U$ and $R$ are the numbers of users and RRHs. Thus, the computational complexity of the interior-point method to solve such a SOCP problem is approximately  ${\cal O}\left( {{{\left( {UR{{N}_r}} \right)}^{3.5}}} \right)$~\cite{ye2011interior}. This is significantly advantageous for a large network compared to the optimal design using existing solvers, which are characterized by a prohibitively exponential-time complexity.

Furthermore, in practical networks, a RRH $r$ should not be included in the serving cluster of user $u$ if $r$ is very far away from $u$. Assuming a network of hexagonal cells, we can pre-select only the 7 RRHs having strongest channel coefficients to user $u$ to be the candidate serving cluster of user $u$, denoted as $\mathcal{C}_u$. After the pre-selection process, Algorithm 2 will identify the optimal serving cluster $\mathcal{V}_u$ within the subsets of $\mathcal{C}_u$. This can significantly reduce the complexity of Algorithm 2 to ${\cal O}\left( {{{\left( {7U{{N_r}}} \right)}^{3.5}}} \right)$. We adopt the pre-selection of serving cluster candidates in the simulation and numerical results show that this approach performs very close to the optimal solution.


\section{Performance Evaluation}\label{sec:result}

In this section, simulation results are presented to evaluate the performance of our proposed \textit{Dynamic-RC} algorithm. We consider a network of hexagonal cells with a RRH in the center of each cell. The neighboring RRHs are separated $1~\mathrm{Km}$ apart from each other. We assume that all the wireless channels in the system experience \emph{block fading} such that the channel coefficients stay constant during each scheduling interval but can vary from interval to interval, i.e., the \emph{channel coherence time} is not shorter than the scheduling interval. We assume that all the RRHs have the same number of transmit antennae $N_r$ and transmit power budget $P_r$. The channel coefficients are calculated following the path-loss model, given as $L~\mathrm{[dB]}=148.1+37.6\log_{10}d_{\mathrm{[km]}}$, and the log-normal shadowing variance set to $8~\mathrm{dB}$. In addition, it is assumed that the channel bandwidth $B$ is $10~\mathrm{MHz}$, is reused across all the users, and the noise spectral density is $-100~\mathrm{dBm/Hz}$. 

\begin{figure*}[t]
 \centering
 \begin{tabular}{ccc}
\hspace*{-.6cm}\includegraphics[scale = .6]{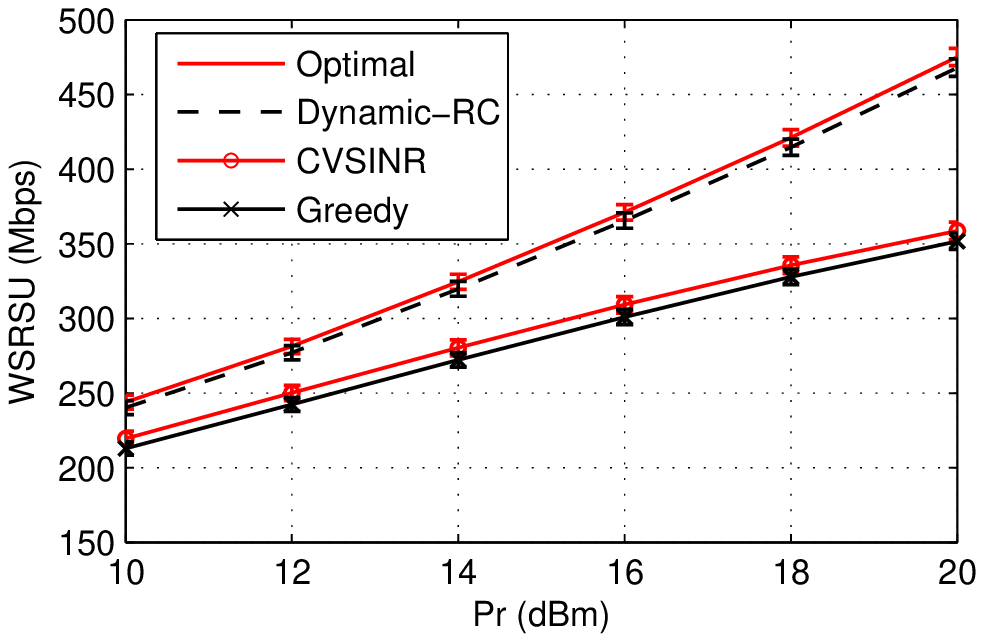} &
\hspace*{-.6cm}\includegraphics[scale = .6]{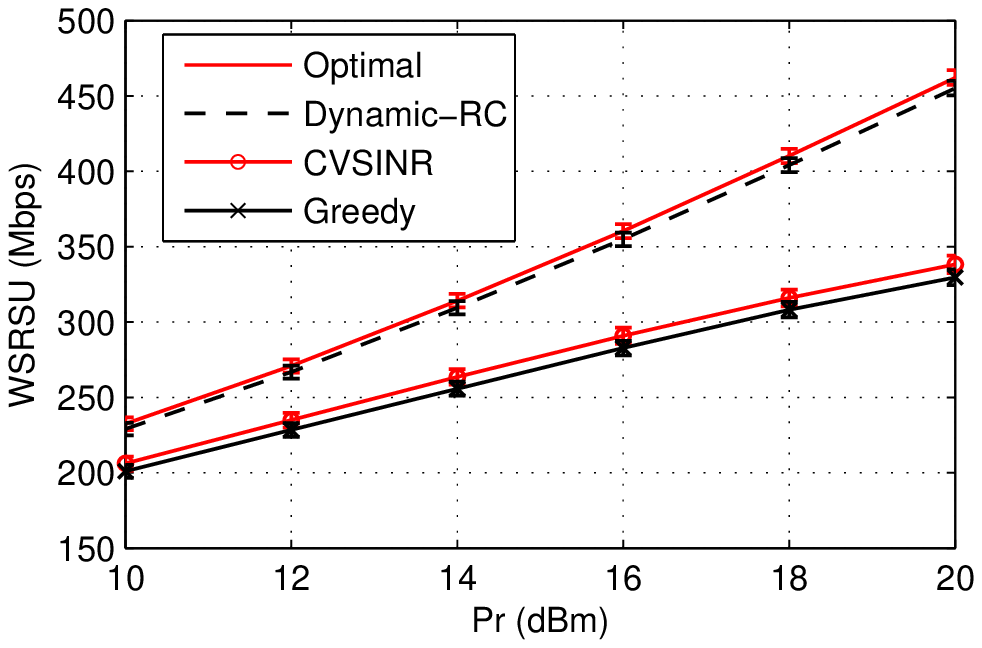} &
\hspace*{-.6cm}\includegraphics[scale = .6]{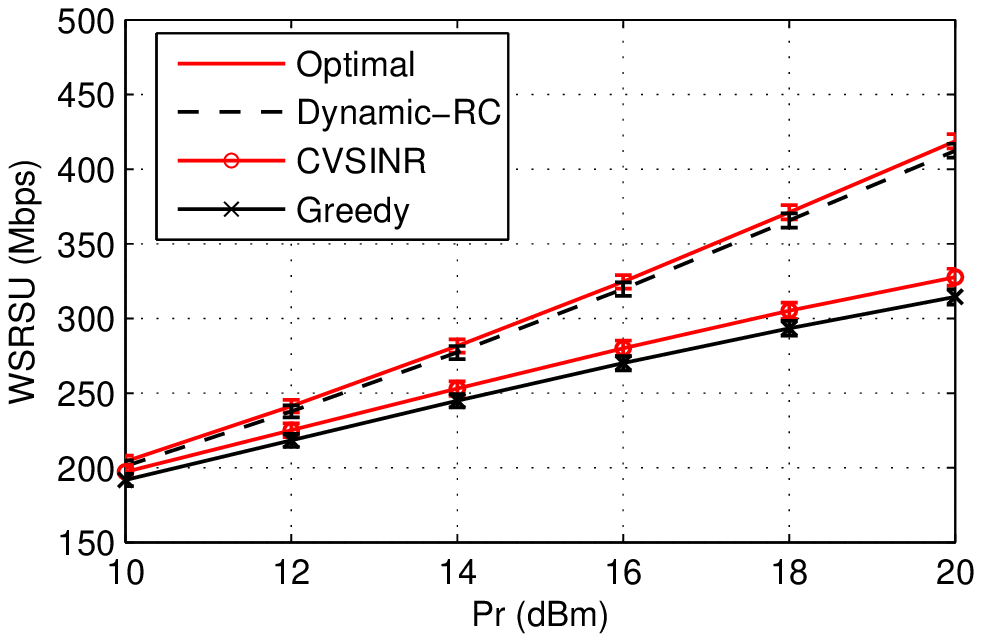} \\
\small (a) & \small(b) & \small(c)
\end{tabular}
\caption{ Weighted Sum-Rate System Utility (WSRSU) of a C-RAN downlink system using different radio cooperation schemes, evaluating on three different user distribution scenarios. (a)-Scenario 1, (b)-Scenario 2, (c)-Scenario 3. 
}\label{fig:opt_dyn_cvsinr}
\end{figure*}

{\underline{\textit {WSRSU performance:  }}}
Firstly, we consider a system without the computing-resource constraint and evaluate the performance of the four radio cooperation algorithms below.
\begin{itemize}
\item \textit{Optimal}: The WSRSU of the optimal scheme is obtained by using the solver MOSEK to solve the equivalent MI-SOCP presentation of problem~\eqref{eq:prob1}.
\item \textit{Dynamic-RC}: Our proposed dynamic radio cooperation, where the solutions are obtained from our iterative, low-complexity Algorithm~\ref{algo:iterative}. 

\item \textit{CVSINR}: A downlink cooperation scheme proposed in \cite{gong2011joint} where the cluster for each user is formed heuristically based on the relative signal strength and the clustered virtual SINR (CVSINR) algorithm is used to design the beamforming vectors.

\item \textit{Greedy}: A greedy clustering algorithm proposed in \cite{baracca2012dynamic}, which solves an equivalent set covering problem to select the set of non-overlapping base station clusters. This scheme uses zero-forcing as the criterion to design beamformers and a greedy algorithm is used for user scheduling.  
\end{itemize} 

\begin{figure}[t] 
 \centering
\includegraphics[scale = 0.6]{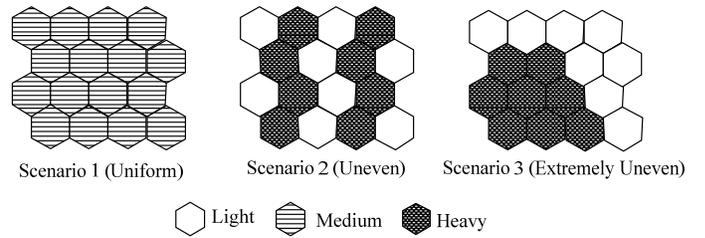} 
\caption{Different user distribution scenarios: Scenario 1 (uniform) with all \emph{medium} (loaded) cells; Scenario 2 (uneven), \emph{light} and \emph{heavy} (loaded) cells are intermixed together; Scenario 3 (extremely uneven), \emph{heavy} cells are grouped together, and the heavy cell group is surrounded by \emph{light} cells. }\label{fig:scenarios}
\end{figure}

We evaluate the four schemes above in a network of 16 cells with three different user distribution scenarios as shown in Fig.~\ref{fig:scenarios}. In particular, \emph{Scenario 1} consists of all \emph{medium} (loaded) cells where users are distributed uniformly over all the cells; \emph{Scenarios 2} and \emph{Scenarios 3} consist of \emph{light} and \emph{heavy} (loaded) cells, however the \emph{heavy} cells are
intermixed with \emph{light} cells in \emph{Scenarios 2} to represent micro-tidal effect while they are grouped together in \emph{Scenarios 3} to represent the macro-tidal effect. In our simulation, we perform 500 drops, in each drop 32 users are placed randomly in the network with 1 user in a light cell, 2 users in a medium cell and 3 users in a heavy cell. The utility marginal functions $q_u$'s are chosen randomly such that $0<q_u \leq 1, \forall u \in \mathcal{U}$. 

Fig.~\ref{fig:opt_dyn_cvsinr}-(a), (b), (c) plot the WSRSU performance of the four considered radio cooperation schemes in Scenario- 1, 2, 3, respectively. It can be seen that our proposed \textit{Dynamic-RC} scheme and the \textit{Optimal} scheme significantly outperform the \emph{CVSINR} and \emph{Greedy} schemes in all three scenarios. This is because the heuristic clustering of the RRHs in the later two schemes is suboptimal, plus their beamforming design algorithms only aim to minimize the intra-cluster interference \emph{but not} the inter-cluster interference. On the other hand, our proposed \emph{Dynamic-RC} scheme takes into account the global network condition that is available at the VBS pool, which provides better clustering decision and beamforming design. Compared to the \textit{optimal} scheme, our proposed \textit{Dynamic-RC} strategy via Algorithm~\ref{algo:iterative} shows a small loss in WSRSU performance but has a significant advantage in reducing the execution time. In fact, in our simulation for the considered system configuration ($U$=$32$, $R$=$16$), MOSEK solver takes more than $100~\mathrm{s}$ to obtain the optimal solution of the MI-SOCP problem, while each iteration in Algorithm~\ref{algo:iterative} takes less than a second and the algorithm overall converges within $15$ iterations.

{\underline{\textit {Impact of Maximum Cluster Size:  }}}
Fig.~\ref{fig:cdf}-(a), (b) plot the CDF of average user rate (w.r.t. 32 users) achieved by Dynamic-RC scheme with different choices of the maximum cluster size, $V_{max}$. In each case, only $V_{max}$ RRHs having the strongest channel coefficients to a user are chosen to be the candidates of that user's serving cluster. This pre-selection is done before running Algorithm 2 to finally find the best serving cluster for each user. When $V_{max} = 1$, there is no cooperation among the RRHs. The results in Fig.~\ref{fig:cdf}-(a), (b) are obtained by performing 500 drops on \emph{Scenario 2} and \emph{Scenario 3}, respectively, with Pr =$\mathrm{10 dBm}$. The utility marginal functions are updated in each drop according to the proportional fairness criterion, i.e., ${q_u} = 1/{\bar R_u}$ where ${\bar R_u}$ is the long term average data rate for user $u \in \mathcal{U}$. It can be seen that the improvement in average user rate due to larger cluster size in Scenario 3 (macro-tidal effect) is greater than that of Scenario 2 (micro-tidal effect). For example, the dynamic cooperation scheme with $V_{max} = 3,5,7$ provides 130$\%$, 137$\%$, and 138.6$\%$ gain, respectively, for the 60th-percentile average user rate over the non-cooperation scheme ($V_{max} = 1$), in Scenario 2; while the corresponding gains in Scenario 3 are 145$\%$, 159$\%$, and 162$\%$, respectively. Although not included here due to space limitation, we observe that when $V_{max}$ exceeds 7 cells, the additional gain is negligible.
 
%

\begin{figure}[t]
 \centering
 \begin{tabular}{c}
\includegraphics[scale = 0.62]{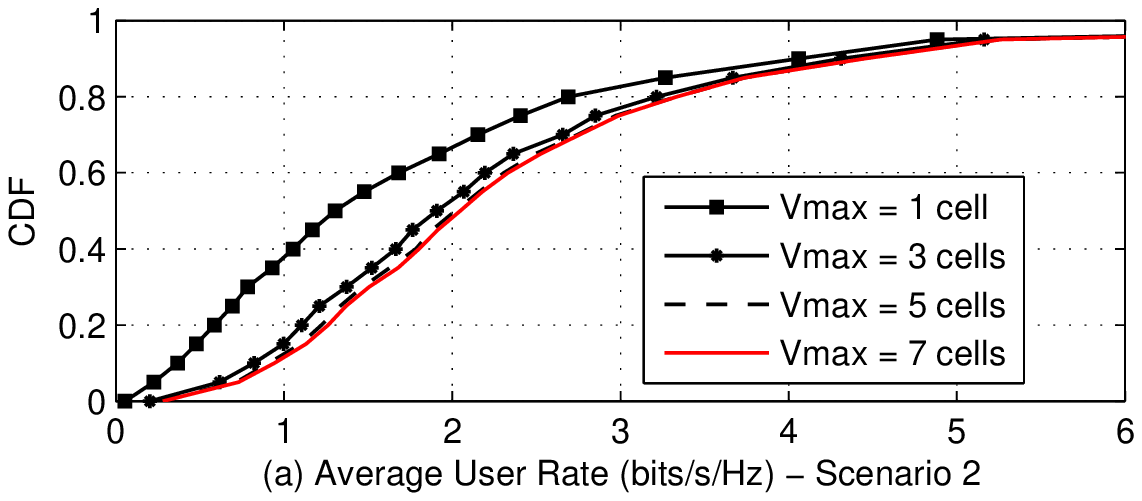} \\
\includegraphics[scale = 0.62]{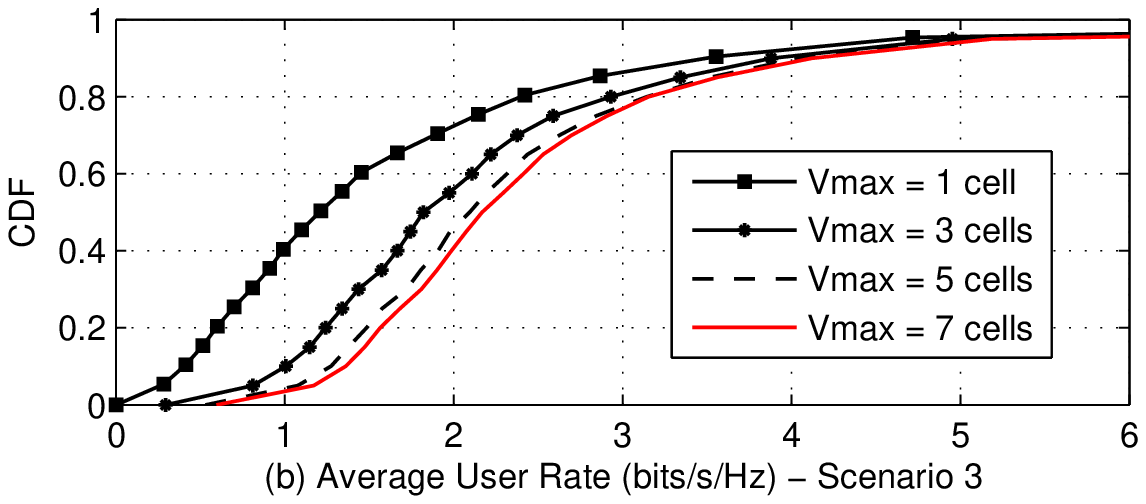} \\
\end{tabular}
\caption{CDF of Average User Rate obtained by Dynamic-RC scheme with different numbers of the maximum cluster size. (Pr = 10 dBm)}\label{fig:cdf}
\end{figure}


{\underline{\textit{ Benefits of Computing Resource Sharing:  }}}
To evaluate the impact of the computing-resource constraint on the system performance, Fig.~\ref{fig:compuing_constraint} compares the WSRSU performance of our considered system with the  \textit{centralized} computing-resource constraint, as expressed in~\eqref{eq:centralized}, versus a conventional system with a \textit{distributed} computing-resource constraint, as expressed in~\eqref{eq:distributed}. In particular, we consider a network of 4 cells with 2 users in each cells in random locations and $q_u$'s are chosen randomly. For a fair comparison, we set $\arg \Gamma(C)$ to $400~\mathrm{Mbps}$ and $\arg \Gamma(C_r)$ to $100~\mathrm{Mbps}$, and ran the \textit{Dynamic-RC} scheme in Algorithm~\ref{algo:iterative} on both systems. Note that, in this setting, each of the 4 RRHs in the \textit{distributed} system is provisioned to process maximum $100~\mathrm{Mbps}$ of user baseband traffic at a time, while in the \textit{centralized} system the VBS pool is provisioned to process maximum $400~\mathrm{Mbps}$ baseband traffic at a time. We say that the computing resource is saturated in each system when the achieved sum-rate (SR) of all the users reaches the maximum provisioned processing traffic rate. As the transmit power increases, observe in Fig.~\ref{fig:compuing_constraint} that the computing capacity of the VBS pool in the \emph{centralized} system saturates earlier than the total computing capacity of the distributed system does (when the computing capacity is saturated at all the RRHs). In fact, the WSRSU and SR of the distributed system saturate almost at the same time while the WSRSU of the centralized system continues to increase after the saturation point (of the SR), and is significantly higher (up to $250\%$ gain) than that of the distributed system. This demonstrates the great potential gains of C-RANs using our \emph{Dynamic-RC} scheme over the conventional distributed RANs in terms of WSRSU, computing resource and transmit power utilization.



\begin{figure}[t]
 \centering
\includegraphics[scale = 0.62]{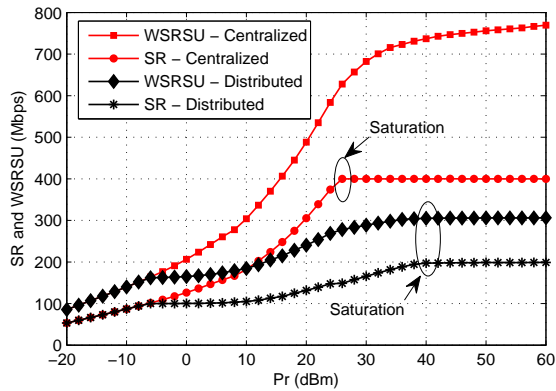} 
\caption{Downlink WSRSU of a C-RAN system with centralized computing resource (\emph{Centralized}) versus that of a traditional system with distributed computing resources (\emph{Distributed}).}\label{fig:compuing_constraint}
\end{figure}

{\underline{\textit {Convergence Behavior of Algorithm 2:  }}}
Fig.~\ref{fig:power_scale} illustrates the convergence behavior of Algorithm~\ref{algo:iterative} in identifying the RRH cluster for a user. We choose randomly a user $u^*$ and monitor the beamforming powers from the 7 candidate RRHs for it's serving cluster. The evolution of the beamforming powers in $\mathrm{dBm/Hz}$ from these RRHs to user $u^*$, calculated as $\left\| {{\bf{w}}_{u^*}^r} \right\|_2^2$, $r=1,...,7$, is shown in Fig.~\ref{fig:power_scale}. Observe that after the $4$-th iteration, only the beamformers from RRH~1 and RRH~4 maintain a non-trivial power, while the rest are forced to almost zero. In this case, the optimal serving cluster of user $u^*$ is identified to be $\mathcal{V}_{u^*}=\left\{\text{RRH 1},\text{RRH 4}\right\}$ within only a few iterations, which demonstrates the efficiency of our proposed \emph{Dynamic-RC} algorithm in quickly making the clustering decision and beamforming design.

\begin{figure}[t]
 \centering
\includegraphics[scale = 0.62]{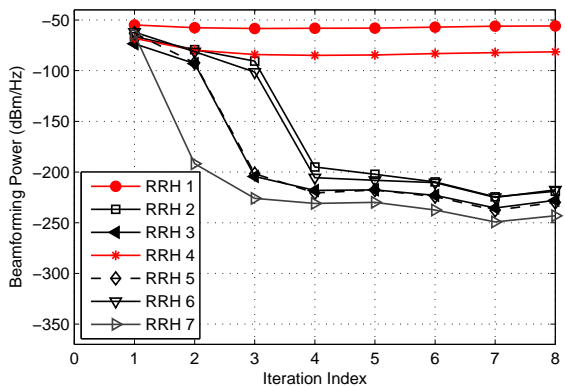} 
\caption{Convergence behavior of the beamforming power at 7 candidate RRHs for the serving cluster of an example user.}\label{fig:power_scale}
\end{figure}
\section{Conclusions and Future Work} \label{sec:con}
We proposed a novel dynamic radio cooperation strategy for Cloud Radio Access Networks (C-RANs) that takes advantage of real-time communication and computing-resource sharing among Virtual Base Stations (VBSs). The underlying optimization problem was formulated as a mixed-integer non-linear program, which is NP-hard. Our approach transforms the original problem into a Mixed-Integer Second-Order Cone Program (MI-SOCP) that is efficiently solved using a novel low-complexity, iterative algorithm. Simulation results showed that our low-complexity algorithm provides close-to-optimal performance in terms of weighted sum-rate system utility while significantly outperforming conventional radio clustering and beamforming schemes.

\textbf{Future Work: } The goal of our future work is to address the system-related issues and evaluate the feasibility and performance of the proposed strategy in a practical system. In fact, we are implementing a C-RAN testbed which consists of an open-source LTE platform OpenAirInterface running on a general-purpose desktop server to realize the VBS pool, and a number of USRP B210/X310 boards to realize the RRHs. 

\textbf{Acknowledgment: }This work was supported in part by the National Science Foundation Grant No. CNS-1319945



\balance

\begin{small}
\bibliographystyle{ieeetr}
\bibliography{MASS_Harry_CamReady}  

\begin{thebibliography}{10}

\bibitem{lee2012comp}
D.~Lee, H.~Seo, B.~Clerckx, E.~Hardouin, D.~Mazzarese, S.~Nagata, and
  K.~Sayana, ``Coordinated multipoint transmission and reception in
  {LTE-Advanced}: deployment scenarios and operational challenges,'' {\em IEEE
  Communications Magazine}, vol.~50, no.~2, pp.~148--155, 2012.

\bibitem{whitepaper13}
{China Mobile Research Institute}, ``{C-RAN: The Road Towards Green RAN},''
  {\em White Paper}, Sept. 2013.

\bibitem{sundaresan2013cloud}
K.~Sundaresan, C.~Liu, M.~Arslan, M.~Jiang, and S.~Rangarajan, ``Cloud-based
  radio access network for small cells,'' July~25 2013.
\newblock US Patent App. 13/951,396.

\bibitem{Pompili2015}
D.~Pompili, A.~Hajisami, and H.~Viswanathan, ``{Dynamic provisioning and
  allocation in Cloud Radio Access Networks (C-RANs)},'' {\em Ad Hoc Networks},
  vol.~30, pp.~128--143, 2015.

\bibitem{bhaumik2012cloudiq}
S.~Bhaumik, S.~P. Chandrabose, M.~K. Jataprolu, G.~Kumar, A.~Muralidhar,
  P.~Polakos, V.~Srinivasan, and T.~Woo, ``Cloudiq: a framework for processing
  base stations in a data center,'' in {\em Proc. ACM International Conference
  on Mobile Computing and Networking (MobiCom)}, 2012.

\bibitem{madhavan2012quantifying}
M.~Madhavan, P.~Gupta, and M.~Chetlur, ``Quantifying multiplexing gains in a
  wireless network cloud,'' in {\em Proc. IEEE International Conference on
  Communications (ICC)}, June 2012.

\bibitem{hajisamicocktail}
A.~Hajisami, H.~Viswanathan, and D.~Pompili, ``{``Cocktail Party in the Cloud":
  Blind Source Separation for Co-Operative Cellular Communication in Cloud
  RAN},'' in {\em Proc. IEEE International Conference on Mobile Ad Hoc and
  Sensor Systems (MASS)}, 2014.

\bibitem{shi2014group}
Y.~Shi, J.~Zhang, and K.~Letaief, ``{Group sparse beamforming for green
  Cloud-RAN},'' {\em IEEE Transactions on Wireless Communcations}, vol.~13,
  no.~5, pp.~2809--2823, 2014.

\bibitem{Luo2015downlink}
S.~Luo, R.~Zhang, and T.~J. Lim, ``{Downlink and Uplink Energy Minimization
  Through User Association and Beamforming in C-RAN},'' {\em IEEE Transactions
  on Wireless Communications}, vol.~14, no.~1, pp.~494--508, 2015.

\bibitem{ha2014energy}
V.~N. Ha, L.~B. Le, and N.-D. Dao, ``{Energy-efficient coordinated transmission
  for cloud-RANs: Algorithm design and trade-off},'' in {\em Proc. Annual
  Conference on Information Sciences and Systems (CISS)}, Mar. 2014.

\bibitem{li2014resource}
J.~Li, M.~Peng, A.~Cheng, Y.~Yu, and C.~Wang, ``{Resource Allocation
  Optimization for Delay-Sensitive Traffic in Fronthaul Constrained Cloud Radio
  Access Networks},'' {\em IEEE Systems Journal}, vol.~PP, no.~99, pp.~1--12,
  2014.

\bibitem{gong2011joint}
J.~Gong, S.~Zhou, Z.~Niu, L.~Geng, and M.~Zheng, ``Joint scheduling and dynamic
  clustering in downlink cellular networks,'' in {\em Proc. IEEE Global
  Telecommunications Conference (GLOBECOM)}, Dec. 2011.

\bibitem{huang2009increasing}
H.~Huang, M.~Trivellato, A.~Hottinen, M.~Shafi, P.~J. Smith, and R.~Valenzuela,
  ``Increasing downlink cellular throughput with limited network mimo
  coordination,'' {\em IEEE Transactions on Wireless Communications}, vol.~8,
  no.~6, pp.~2983--2989, 2009.

\bibitem{baracca2012dynamic}
P.~Baracca, F.~Boccardi, and V.~Braun, ``{A dynamic joint clustering scheduling
  algorithm for downlink CoMP systems with limited CSI},'' in {\em Proc.
  International Symposium on Wireless Communication Systems (ISWCS)}, Aug.
  2012.

\bibitem{alizadeh2003second}
F.~Alizadeh and D.~Goldfarb, ``Second-order cone programming,'' {\em
  Mathematical programming}, vol.~95, no.~1, pp.~3--51, 2003.

\bibitem{l1_reweighted}
E.~J. Candes, M.~B. Wakin, and S.~P. Boyd, ``Enhancing sparsity by reweighted
  l1 minimization,'' {\em Journal of Fourier analysis and applications},
  vol.~14, no.~5-6, pp.~877--905, 2008.

\bibitem{tran2012fast}
L.~Tran, M.~F. Hanif, A.~Tolli, and M.~Juntti, ``Fast converging algorithm for
  weighted sum rate maximization in multicell {MISO} downlink,'' {\em IEEE
  Signal Processing Letters}, vol.~19, no.~12, pp.~872--875, 2012.

\bibitem{beck2010sequential}
A.~Beck, A.~Ben-Tal, and L.~Tetruashvili, ``A sequential parametric convex
  approximation method with applications to nonconvex truss topology design
  problems,'' {\em Journal of Global Optimization}, vol.~47, no.~1, pp.~29--51,
  2010.

\bibitem{cplex}
{IBM ILOG CPLEX Optimizer Version l v12.4}. Available:
  {http://www-01.ibm.com/software/commerce/optimization/cplex-optimizer/}.

\bibitem{mosek}
{MOSEK Optimization Software Version 7.0}. Available: {http://www.mosek.com/}.

\bibitem{ye2011interior}
Y.~Ye, {\em Interior point algorithms: theory and analysis}, vol.~44.
\newblock John Wiley \& Sons, 2011.

\end{thebibliography}
\end{small}

\end{document}